\definecolor{violet}{rgb}{0.6,0.4,0.8}
\newtheorem{theorem}{Theorem}[]
\newtheorem{proposition}{Proposition}[section]
\newtheorem{definition}{Definition}[section]
\newtheorem{remark}{Remark}[section]
\newcommand{\ind}[1]{\mathbbm{1}_{\left[ {#1} \right] }}
\newcommand{\calF}{\mathcal{F}}
\newcommand{\calG}{\mathcal{G}}
\newcommand{\calW}{\mathcal{W}}
\newcommand{\bfz}{\mathbf{z}}
\newcommand{\Z}{\mathbb{Z}}
\newcommand{\N}{\mathbb{N}}
\newcommand{\R}{\mathbb{R}}
\newcommand{\bbP}{\mathbb{P}}
\newcommand{\ep}{\varepsilon}
\newcommand{\wlim}{\lim}
\title{Examples of DLR states which are not weak limits of finite
  volume Gibbs measures with deterministic boundary conditions}
\author{Loren Coquille}
\date{\today}
\begin{document}
\maketitle

\begin{abstract} 
We review what is known about the structure of the set of weak limiting states of the
  Ising and Potts models at low enough temperature, and in particular we prove
that the mixture $\frac12(\mu^\pm+\mu^\mp)$ of two reflection-symmetric Dobrushin
  states of the 3-dimensional Ising model at low enough temperature is
  a Gibbs state which is not a limit of
  finite-volume measures with deterministic boundary conditions.
Finally we point out what the issues are in order to extend the
analysis to the Potts model, and give a few conjectures.
\end{abstract}

\section{Introduction}

In the end of the 60s, the seminal works of Dobrushin and Lanford-Ruelle~\cite{Dob68_DLR,LanRue69}
 describe the equilibrium states of a lattice model of statistical mechanics in the
thermodynamic limit as
probability measures $\mu$ that are solutions of the DLR equation:
\[
\mu(\cdot)=\int\mathrm{d}\mu(\omega)\gamma_\Lambda(\cdot\,\vert\,\omega),
\qquad\text{for all finite subsets $\Lambda$ of the lattice,}
\]
where the probability kernel $\gamma_\Lambda$ is the Gibbsian specification
associated to the system; see~\cite{Geo88}. Under very weak assumptions 
(at least for bounded spins),
it can be shown that the set $\calG$ of all DLR states is a non-empty
simplex, which contains the (a priori non-convex) set of weak limits of finite-volume Gibbs
measures, denoted $\calW$. Moreover, extremal measures of
$\calG$, the set of which is denoted $\rm{ex}\calG$, have the extra property to be weak limits of finite
volume measures with boundary conditions that are typical for it, which implies that $\rm{ex}\calG\subset\calW$. 
The analysis of $\rm{ex}\calG$ is in general a very hard problem which remains
essentially open in dimensions $3$ and higher, for any nontrivial
model, even in perturbative regimes.

This article focuses on the relationship between $\calG$ and
$\calW$. Although it is clear that $\calW\subseteq\calG$, 
it is harder to determine whether
$\calW=\calG$. For
  example, in \cite{AlbZeg1992} the question is mentioned as
an open problem. Here we will settle the question by showing that it
is not the case. Indeed we will exhibit
a (non-extremal) infinite-volume measure of the 3-dimensional Ising model
which belongs to $\calG\backslash\calW$.\\

\textit{Note added. }After submitting this paper, I was informed by
Y.\ Higuchi that the result of Theorem \ref{result-ising} was
independently found before, and privately communicated to him, by
M.~Miyamoto, who afterwards also mentioned it in his textbook (only in
Japanese) \cite{Miy2004}.\\

 We now introduce
some further notation and define the sets $\calG$ and $\calW$ in detail for the
Ising and Potts models.
Let $q,d\in\N\backslash\{0,1\}$ and $\Omega=\{1,\ldots,q\}^{\Z^d}$ be the space of configurations. Let $\Lambda$
be a finite subset of $\Z^d$, and $\Lambda^c=\Z^d\setminus\Lambda$ be its
complement. The finite-volume Gibbs measure in $\Lambda$ for the
$q$-state Potts model with boundary conditions $\omega\in\{0,1,\ldots,q\}^{\Z^d}$ and at
inverse-temperature $\beta>0$ is the probability measure on $\Omega$ (with the
associated product $\sigma$-algebra) defined by
\[
\bbP_{q,\beta,\Lambda}^\omega(\sigma)
=
\begin{cases}
\frac{1}{Z^\omega_{\beta,\Lambda}}{\rm e}^{-\beta H^\omega_\Lambda(\sigma)}& \text{ if
$\sigma_i=\omega_i$ for all $i\in \Lambda^c$}\\
0 &\text{ otherwise},
\end{cases}
\]
where the normalization constant $Z^\omega_{\beta,\Lambda}$ is the partition
function. The Hamiltonian in $\Lambda$ is given by
\[
H^\omega_\Lambda(\sigma) = -\sum_{\substack{i\sim
j\\\{i,j\}\cap\Lambda\neq\varnothing}}\delta_{\sigma_i,\sigma_j}
\]
where $i\sim j$ if $i$ and $j$ are nearest neighbors in $\Z^d$. In the case of
pure boundary condition $i \in\lbrace 1,\ldots, q\rbrace$, meaning that
$\omega_x=i$ for every $x\in \Lambda^c$, we denote the measure by $\bbP_{q,\beta,\Lambda}^{i}$.  In the case of
free boundary condition,
$\omega_x=0$ for every $x\in \Lambda^c$, we denote the measure by $\bbP_{q,\beta,\Lambda}^{\varnothing}$.

 Below we write
$\mu^\omega_{\beta,\Lambda}$ for the Ising measure on
$\{-1,+1\}^{\Z^d\cap\Lambda}$ with boundary condition $\omega$, that
is for $\bbP^\omega_{2,\beta/2,\Lambda}$ with states $1,2$ identified
with $-1,+1$ (the constant $1/2$ in front of $\beta$ comes from the
identity $\delta_{\sigma_i,\sigma_j}=(1+\sigma_i\sigma_j)/2$ when $\sigma_i,\sigma_j\in\{-1,+1\}$). 

For an arbitrary subset $A$ of $\Z^d$, let $\calF_A$ be the $\sigma$-algebra
generated by spins in $A$. 

\begin{definition}A probability measure $\bbP$ on $\Omega$
is an \emph{infinite-volume DLR state} for the $q$-state Potts model at
inverse temperature $\beta$ if and only if it satisfies the following DLR
condition:
\begin{equation}\label{def-DLR}
\bbP(\cdot|\mathcal F_{\Lambda^c})(\omega)
=
\bbP^\omega_{q,\beta,\Lambda}\qquad\text{ for $\bbP$-a.e. $\omega$,
and all finite subsets $\Lambda$ of $\Z^d$}.
\end{equation}
Let $\calG_{q,\beta}$ be the space of infinite-volume DLR
states for the $q$-state Potts model. This set being a simplex
\cite{Geo88}, let $\rm{ex}\calG_{q,\beta}$ denote the set of its
extremal points. Let $\rm{tr}\calG_{q,\beta}$ denote the set of translation
invariant DLR states, namely measures $\bbP\in\calG_{q,\beta}$ such that
$\bbP(f\circ\tau)=\bbP$ for all local functions $f$ and all
translations $\tau$ of $\Z^d$.
\end{definition}

We also formally define the (in principle smaller) set of Gibbs states
which can be obtained via boundary conditions as follows:

\begin{definition}A probability measure $\bbP$ on $\Omega$
is a \emph{weak-limiting Gibbs state} for the $q$-state Potts model at
inverse temperature $\beta$ if:
\[
\text{for all local functions }f,\quad\bbP(f)=\lim_{\Lambda_n\uparrow\Z^d}\bbP_{q,\beta,\Lambda_n}^{\omega_n}(f)
\]
 for some sequence of finite
  volumes $(\Lambda_n)_n\uparrow\Z^d$ and of deterministic boundary conditions
$(\omega_n)_n\in\Omega$. We write $\bbP=\wlim_{n\to\infty} \bbP_{q,\beta,\Lambda_n}^{\omega_n}$.
Let $\calW_{q,\beta}$ be the space of weak-limiting Gibbs
states for the $q$-state Potts model.
\end{definition}

The non-emptiness of the set of DLR states follows from a compactness
argument in general \cite{Geo88}, but for the Potts model this can be
proved constructively.
For $i\in\{1,\ldots,q\}$, the weak limits $\wlim_{\Lambda\uparrow\Z^d}
\bbP^{i}_{q,\beta,\Lambda}$ exist and belong to $\calG_{q,\beta}$ (in particular, the limit does not depend
on the sequence of boxes chosen); this follows easily, e.g., from the random
cluster representation~\cite{Gri2006}. We denote by $\bbP^{i}_{q,\beta}$ the corresponding
limit. It can be checked~\cite[Prop.~6.9]{GeoHagMae01} that the phases
$\bbP^{i}_{q,\beta}$ are
translation invariant.

When $\beta$ is less than the critical inverse temperature
$\beta_c=\beta_c(q,d)$ (which is non-trivial for $d\geq2$), it is known that there exists a
unique infinite-volume Gibbs measure. The
relevant values of $\beta$ for a study of $\calG_{q,\beta}$ are thus $\beta\ge
\beta_c(q,d)$. 

\subsection{The case of the Ising model}

Let $\mu^+$ and $\mu^-$ be the two pure phases (that is,
translation-invariant extremal Gibbs measures) of the Ising model. Let
$\mu^\pm$ denote the limiting infinite-volume Gibbs state for the
Dobrushin boundary condition $\omega^\pm$ such that
  $\omega^\pm_{(x,y,z)}=+1$ for $z\geq0$ and $-1$ for $z<0$. We write
$\mu^\mp$ for the ``spin flip'' of $\mu^\pm$, namely the measure symmetric with respect to the plane $z=-1/2$.

\subsubsection{Dimension 2} In the beginning of the 80s, Aizenman
\cite{Aiz80} and Higuchi \cite{Hig81} proved independently that the
DLR states of the 2d Ising model are all convex combination of the
pure phases, namely, for any $\beta\geq0$,
\begin{equation}\label{AH}
\calG_{2,\beta}=\{\alpha\mu^++(1-\alpha)\mu^- : \alpha\in[0,1]\}.
\end{equation}
In particular, all the DLR states are translation invariant:
$\rm{tr}\calG_{2,\beta}=\calG_{2,\beta}$.

Gallavotti~\cite{Gal72}
proved, by studying the fluctuations of the Dobrushin interface, that
the corresponding weak limiting state $\mu^\pm$ is the mixture $\tfrac12(\mu^+ +
\mu^-)$. This was refined by Higuchi~\cite{Hig79}, who proved that the interface, after
diffusive scaling, weakly converges to a Brownian bridge at sufficiently low
temperatures. These two results were then pushed to all subcritical temperatures
by, respectively, Messager and Miracle-Sole~\cite{MesMir1977} and Greenberg and
Ioffe~\cite{GreIof05}.

By exploiting the Gaussian scaling of the Dobrushin interface, Abraham
and Reed \cite{AbrRee1976} produced a set of deterministic boundary
conditions $(\omega_\alpha)_{\alpha\in(0,1)}$ such that
$\wlim_{n\to\infty}\mu_{\Lambda_n}^{\omega_\alpha}=\alpha\mu^++(1-\alpha)\mu^-$.
Basically, they shift up the Dobrushin boundary condition by an amount
$C_\alpha\sqrt{n}$ around the cubic box of size $n$, and choose the
right constant $C_\alpha$ to get the mixture with proportion $\alpha$
of $\mu^+$. These results imply that the weak limiting states and the
DLR states of the Ising model coincide in 2 dimensions: for any $\beta\geq0$,
\begin{equation*}
\calW_{2,\beta}=\calG_{2,\beta}.
\end{equation*}

Note that the behavior of the macroscopic interfaces induced by an arbitrary
boundary condition was studied in \cite{CoqVel2010}. We refer to
\cite{BodIofVel2000} for a review on the microscopic theory of
equilibrium crystal shapes.

\subsubsection{Dimension 3 (and more)}
The existence of non-translation invariant states in dimension 3 and
more was discovered by
Dobrushin \cite{Dob72}. He proved that, at low enough
temperatures, the interface created under $\mu^\pm_{\Lambda_n}$ is
rigid, namely given by a plane with
local defects, and the
corresponding weak limiting Gibbs state is extremal. 
This implies in particular the existence of a countable number of
extremal DLR states in dimension $d\geq3$ at low enough temperature,
which are in
bijection with all the hyperplanes of $\Z^d$ orthogonal to any
coordinate axis. It is however widely believed that the 3-dimensional system has a ``roughening-
temperature'' $1/\beta_R$ above which the horizontal interface is no longer
sharp, and the corresponding Gibbs state is translation invariant.

The horizontal Dobrushin states are conjectured to be the only extremal
non-translation invariant states in 3 dimensions. We quote \cite{MesMir1977}: ``there can only be planes parallel to the faces of the lattice
cubes at finite distance, and no angles, corners, or diagonal planes
as rigid interfaces.''
For example, the 3d Dobrushin interface orthogonal to the vector $(1,1,1)$ is believed to be
delocalized, and to have $O(\sqrt{\log n})$ fluctuations in finite
volume at low temperature, where $n$ is the side length of the box. 
The result is currently known only\footnote{We emphasize that the natural
monotonicity of the fluctuations that we could expect with respect to
the temperature is not true in general. Indeed, positive temperature may result in reduction of the
fluctuations~\cite{BodGiaVel2001}.} at zero temperature
  \cite[Theorem 15]{Ken1997}.
 Note that the similar
diagonal Dobrushin interface in 4 and more dimensions (orthogonal to the vector $(1,1,1,\ldots,1)$) is rigid at low
enough temperature \cite{MesMir1977}, which enriches the set
$\rm{ex}\calG$. 

It is an interesting question to determine what the typical
fluctuations of the interfaces enforced by general boundary
conditions are, in particular those giving rise to non-planar limiting
shapes. This is in general already an
open problem at zero temperature, and for an isotropic surface
tension.
The best known results in this direction are large deviation principles. Cerf and Pisztora \cite{CerPis2001} proved that, in
dimensions $d\geq3$, for a
given ``macroscopic'' boundary condition\footnote{The boundary of a
  fixed region $\Omega\subset\R^d$ must be partitioned in such a way that for each $n$, on the boundary of
  $\Omega_n=\Omega\cap\frac1n\Z^d$, the number of
  nearest-neighbor pairs of vertices having different spins
  is $o(n^{d-1})$. },
asymptotically as the mesh size of the box tends to zero, the law of the so-called phase
partition (i.e.\ the partition of the space according to the value of
the locally\footnote{More precisely in a region of size $f(n)$ such
  that $\log n\ll f(n)\ll n^{1/(d-1)}$.} dominant spin) is determined by a variational problem. More
precisely, the empirical phase partition is $\ep n$-close to some partition
which is compatible with the boundary condition and minimizes the
surface tension. It is conjectured that, as $\beta\downarrow\beta_c$,
the (rescaled) surface tension becomes more and more isotropic and so the
solution of the variational problem should approach the solution of
the classical (isotropic) Plateau problem.

Concerning $\rm{tr}\calG$, Bodineau \cite{Bod2006} proved that for any $d\geq3$ all the
translation invariant Gibbs states of the Ising model are convex
combinations of the pure phases $\mu^+$ and $\mu^-$.

Let us now summarize which consequences these known results have on
the sets $\calW$ and $\calG$ in dimension 3; see Figure
\ref{inclusion}.

\begin{figure}[h]
  \centering
 \includegraphics[width=10cm]{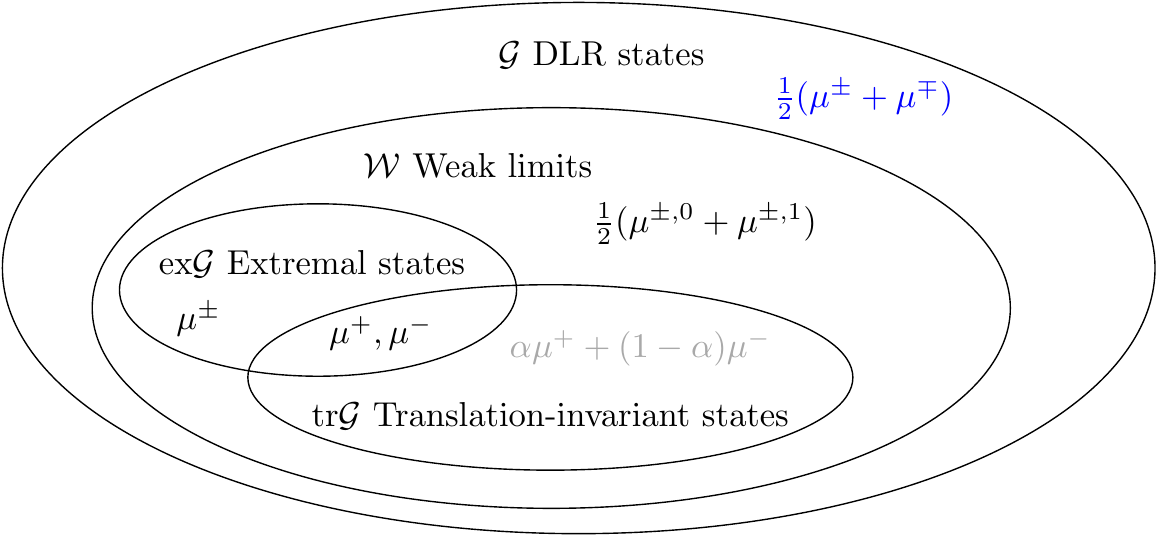}
 \caption
   {Inclusion of properties, and examples for the Ising model on $\Z^3$ at
     low temperature. It is a still a conjecture that
     all mixtures of $\mu^+$ and $\mu^-$ are weak limits. We prove the outmost
     result.}
\label{inclusion}
\end{figure}

 If the conjecture about the fluctuations of the low-temperature
 tilted Dobrushin interface is true, then the
 corresponding Gibbs state in the thermodynamic limit is translation
 invariant, and an
argument \`a la Abraham and Reed \cite{AbrRee1976} allows to
construct a sequence of boundary conditions which have
$\alpha\mu^++(1-\alpha)\mu^-$ as weak limit, for any $\alpha\in(0,1)$.
One has to shift
up the plane by an amount $C_\alpha\sqrt{\log n}$. Together with Bodineau's
characterization \cite{Bod2006} of the translation invariant states, this would imply
that $\rm{tr}\calG\subset\calW$.

Note that
there exist mixtures of non-translation invariant states which
are reachable with boundary conditions. Let us denote by $\mu^{\pm,z}$
the Ising measure with horizontal Dobrushin boundary condition, parallel to
the plane $xy$ and at height $z$, then
$\mu=\frac12(\mu^{\pm,0}+\mu^{\pm,1})$ is the weak limit of
the ``one-step boundary condition'':
$$\omega_{(x,y,z)}=\left\{\begin{array}{l}+1 \text{ if } z\geq0 \text{ and if } z\geq-1
\text{ and }x\geq0,\\
-1 \text{ otherwise}.\end{array}\right.$$

Indeed, at low enough temperature, the horizontal Dobrushin interfaces are
localized, and so the typical interface induced by the ``one-step
boundary condition'' consist of a plane at height $-1$ inside the
half-space $x\geq0$, a plane at height $0$ inside the
half-space $x<0$, both with local defects, and a
one-dimensional step between the two which undergoes Brownian
fluctuations; see Figure \ref{step}. The associated Gibbs state is invariant under the
translations parallel to the $xy$ plane. General ``step
boundary conditions'' and their link with facets of the equilibrium crystal are studied in \cite{Mir1995}, see in particular
Remark 7.

\begin{figure}[h]
  \centering
 \includegraphics[width=6.5cm]{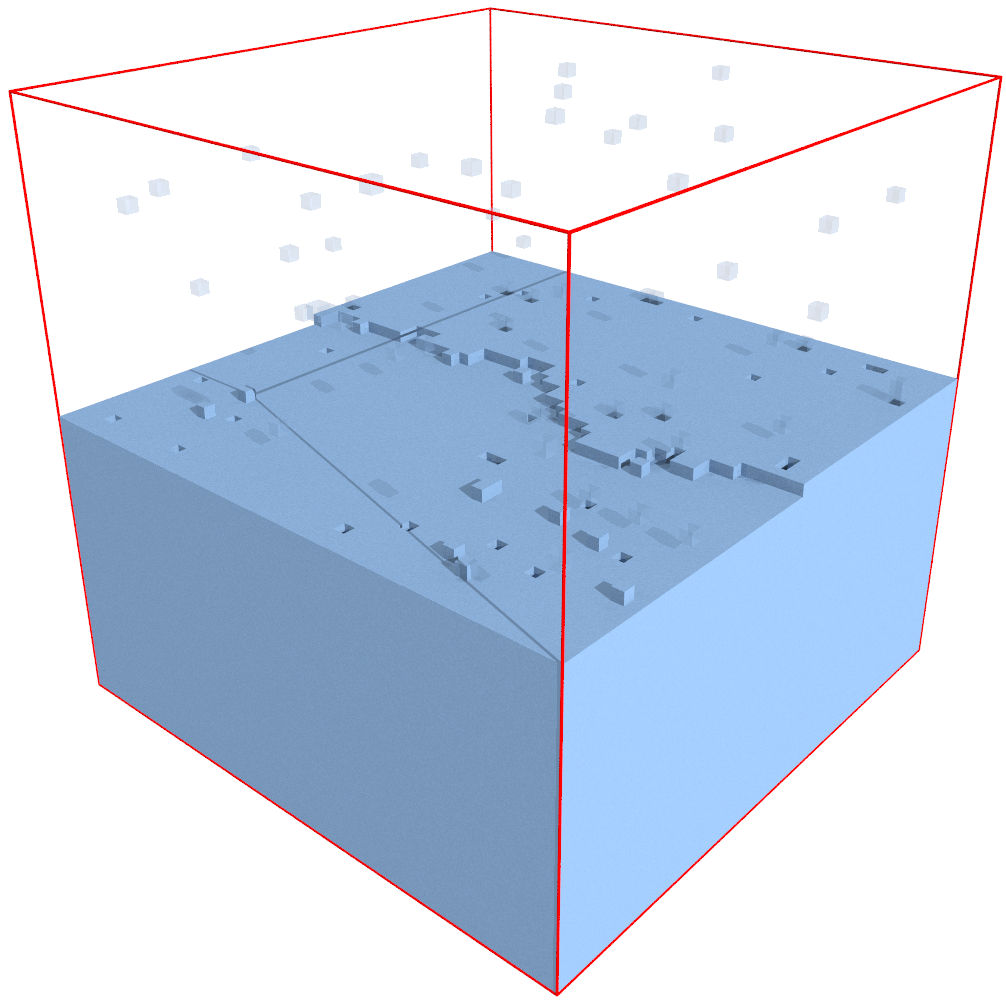}
 \caption
   {A realization of the ``one-step boundary condition'' at low
     temperature. Minus spins are blue (and translucent) cubes, and plus spins are
     transparent. 
     Simulation due to V.\ Beffara.}
\label{step}
\end{figure}

In this paper, we prove that there also exist mixtures of non-translation invariant states which
are \emph{not} reachable with boundary conditions. The proof is
presented in Section \ref{proof}.
\begin{theorem}\label{result-ising}
In dimension $d\geq3$, for $\beta$ large enough (depending on $d$),
\begin{equation*}\mu=\frac12(\mu^\pm+\mu^\mp)\in\calG_{2,\beta}\backslash\calW_{2,\beta}.
\end{equation*}
Namely $\mu$ cannot be reached by a sequence of finite
volume measures with boundary conditions.
\end{theorem}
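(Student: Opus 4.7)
The plan is to show that $\mu$ violates the FKG inequality on a pair of nearest-neighbor spins straddling the rigid Dobrushin interface, while every element of $\calW_{2,\beta}$ must satisfy FKG by passing to the limit from finite volume. Membership $\mu\in\calG_{2,\beta}$ is immediate from convexity of the simplex $\calG_{2,\beta}$, since $\mu^\pm,\mu^\mp\in\calG_{2,\beta}$.

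First, for any deterministic boundary condition $\omega$ the finite-volume Ising measure $\mu^\omega_{\beta,\Lambda}$ satisfies the FKG inequality, because the ferromagnetic nearest-neighbor Hamiltonian is supermodular in the spin variables regardless of the boundary configuration (Holley's criterion). Consequently, for any local increasing functions $f,g$ and any sequence $(\omega_n,\Lambda_n)$,
\begin{equation*}
\mu^{\omega_n}_{\beta,\Lambda_n}(fg)\;\ge\;\mu^{\omega_n}_{\beta,\Lambda_n}(f)\,\mu^{\omega_n}_{\beta,\Lambda_n}(g),
\end{equation*}
and since $f,g,fg$ are bounded and local this inequality is preserved in the weak limit. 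Hence every $\nu\in\calW_{2,\beta}$ satisfies FKG on local increasing observables.

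Second, I exhibit a pair witnessing failure of FKG under $\mu$. Take $f=\sigma_{(0,0,0)}$ and $g=\sigma_{(0,0,-1)}$; both are local and increasing. The spin-flip symmetry $\mu^\mp(\cdot)=\mu^\pm(-\,\cdot)$ gives $\mu^\mp(\sigma_x)=-\mu^\pm(\sigma_x)$ for every $x$, so $\mu(f)=\mu(g)=0$, while the product $\sigma_{(0,0,0)}\sigma_{(0,0,-1)}$ is spin-flip invariant, giving $\mu(fg)=\mu^\pm(\sigma_{(0,0,0)}\sigma_{(0,0,-1)})$. For $\beta$ large enough in $d\ge 3$, Dobrushin's theorem and the associated Peierls-type contour estimate imply $\mu^\pm(\sigma_{(0,0,0)}=+1,\ \sigma_{(0,0,-1)}=-1)\ge 1-e^{-c\beta}$, whence $\mu(fg)\le -1+2e^{-c\beta}<0=\mu(f)\mu(g)$. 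This violates FKG and therefore forces $\mu\notin\calW_{2,\beta}$.

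The only delicate input is the quantitative rigidity estimate on the pair $(\sigma_{(0,0,0)},\sigma_{(0,0,-1)})$, which is precisely where the hypothesis ``$\beta$ large enough (depending on $d$)'' is consumed and which is supplied by Dobrushin's classical contour analysis cited in the introduction; everything else is a formal consequence of FKG and the definition $\mu=\tfrac12(\mu^\pm+\mu^\mp)$.
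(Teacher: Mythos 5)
Your proposal is correct and takes essentially the same approach as the paper: membership in $\calG_{2,\beta}$ by convexity, FKG for finite-volume measures with arbitrary deterministic boundary conditions preserved under weak limits, and a violation of FKG across the rigid Dobrushin interface using the symmetry $\mu(\sigma_x)=0$. Your covariance formulation $\mu(\sigma_{(0,0,0)}\sigma_{(0,0,-1)})<0=\mu(\sigma_{(0,0,0)})\,\mu(\sigma_{(0,0,-1)})$ is an equivalent restatement of the paper's conditional-probability inequality $\mu(\sigma_{\bfz}=+1\,|\,\sigma_{\hat\bfz}=-1)\geq m^\star_2>1/2=\mu(\sigma_{\bfz}=+1)$, with van Beijeren's bound replaced by your Peierls-type rigidity estimate — the same low-temperature input.
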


\subsubsection{Random boundary conditions} The Ising model with
boundary conditions sampled from the symmetric i.i.d.\ field $\{-1,
1\}^{\Z^2}$ has been studied by van Enter \emph{et al.~}\cite{vEnMedNet2002,
  vEnNetSch2005}. A corollary of their results is that for a typical boundary condition,
the probability of the set of configurations containing an interface
tends to zero in the infinite-volume limit, which is a stronger result
than the absence of translationally non-invariant states
\eqref{AH}. When $\Z^2$ is replaced by $\Z^d$, for $d\geq4$, it is expected that $\{\mu^-,\mu^+\}$ is the almost
sure set of limit measures (along the regular sequence of cubes). For
$d=2,3$, they conjecture that this set is
$\rm{tr}\calG=\{\alpha\mu^++(1-\alpha)\mu^- : \alpha\in[0,1])\}$.

An interesting result concerning the biased setting can be
found in \cite{Hig1978}. Higuchi proved that $\mu^-$ is the only limiting Gibbs state
corresponding to a sequence of boundary conditions $\omega_n$ such
that the density $n_+$ of $+$ spins is smaller than $3/8$ on $\partial\Lambda_n$
for every $\omega_n$. The fraction $3/8$ is optimal in the sense that
for any $\theta>3/8$, there exists a sequence of boundary conditions
such that $3/8<n_+\leq\theta$ and for which the limiting Gibbs
state is $\mu^+$.

\subsubsection{Global Markov Property}
It is worth noting that a mixture of Dobrushin measures similar to
$(\mu^\pm+\mu^\mp)/2$ provides an example of a DLR state failing to satisfy the
global Markov property. We refer to \cite{AlbZeg1992} for a review of
the role of this property in statistical mechanics. However, there are
extremal Gibbs measures constructed by Israel \cite{Isr1986} which
also lack the global Markov property, and thus the two properties
(lacking the global Markov property and not being a weak limit state) are
not the same.
A state $\bbP$ is said to satisfy the
  global Markov property if
\begin{equation}\label{GMP}
  \bbP(\cdot|\calF_{\Lambda^c})(\omega)=\bbP(\cdot|\calF_{\partial\Lambda})(\omega)\text{
  for any (not necessarily finite) set $\Lambda$. }
\end{equation}
  For spin
  systems with
  nearest-neighbor interaction it is a generalized version of
  \eqref{def-DLR}. Let $\tilde\mu^\mp$ be the Gibbs
  states obtained from $\mu^\pm$ by the reflection
  $(x,y,z)\to(x,y,-z)$. Note that $\tilde\mu^\mp\neq\mu^\mp$, since this
  reflection is the identity on the plane $z=0$, so that
  $\tilde\mu^\mp$ agrees with $\mu^\pm$ on $\calF_{\{z=0\}}$. Then the
  article \cite{Gol1980} explains that
  $\tilde\mu:=\frac12(\mu^\pm+\tilde\mu^\mp)$ does not satisfy
  \eqref{GMP} for $\Lambda=\{z>0\}$, since $\tilde\mu^\mp$ and
  $\mu^\pm$ agree on $\partial\Lambda$ but are mutually singular on
  $\Lambda^c$. For the proof of Theorem \ref{result-ising}, we also
  use the idea that specifying $\sigma$ on one side of the box determines
  whether $\sigma$ is a configuration of the first or the second phase
  of the mixture, but we need more input.

\subsection{The case of the Potts model}

\subsubsection{Dimension 2} The set $\calG_{q,\beta}$ for $\beta>\beta_c$ has been recently proved
to be the simplex with the $q$ pure phases as
extremal measures \cite{CoqDumIofVel2014}. In particular all the Gibbs states are translation
invariant. 
\begin{equation}\label{Gq}
\calG_{q,\beta}=\left\{\sum_{i=1}^q \alpha_i\bbP^i_{q,\beta} : \alpha_1,\ldots,\alpha_q\geq0,
\sum_{i=1}^q\alpha_i=1\right\}
\end{equation}
We summarize here the main results of the above work.

Although an arbitrary boundary condition $\omega_n$ can a priori enforce the presence of
$O(n)$ interfaces, we proved that, uniformly in $\omega_n$, only a
finite number of them penetrate up to the half box with high probability.
Moreover, these
macroscopic interfaces are in a $\delta n$ neighborhood of the graphs
which are solutions of the so-called Steiner problem: link the
endpoints in a way which is compatible with the boundary condition and
which minimizes surface tension.

These minimal graphs are called Steiner forests (they are collections
of disjoint trees).  Due to the uniform convexity
of the surface tension, proved in \cite{CamIofVel08} for all $q\geq2$,
and a
general geometric argument exposed in
\cite{AlfCon91}, each inner node of the trees has degree 3, and there exists an $\eta > 0$ such that the angle between two edges incident to an inner node is always larger than $\pi/2+\eta$.

As a consequence, the possible local configurations of the system in the
$\ep n$ neighborhood of the origin are either a pure phase, or two phases
separated by a straight interface (which undergoes Brownian
fluctuations \cite{CamIofVel08}), or three phases separated by a ``tripod-like''
interface (whose triple point and legs undergo Brownian
fluctuations).
The archetypical
illustration is the 1-2-3-4 boundary condition which gives rise to two
possible Steiner trees; see Figure \ref{simu1234}.

\begin{figure}[h]
  \centering
\includegraphics[width=5cm]{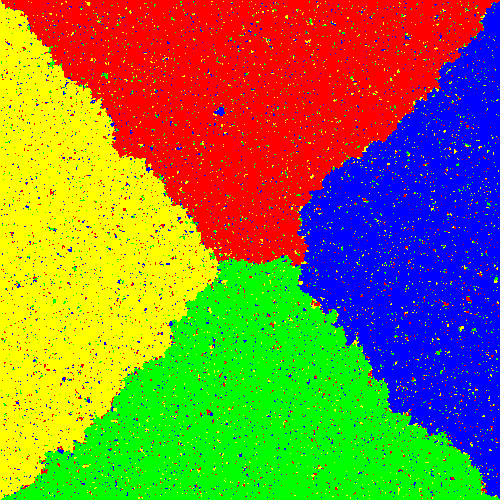}
\includegraphics[width=5cm]{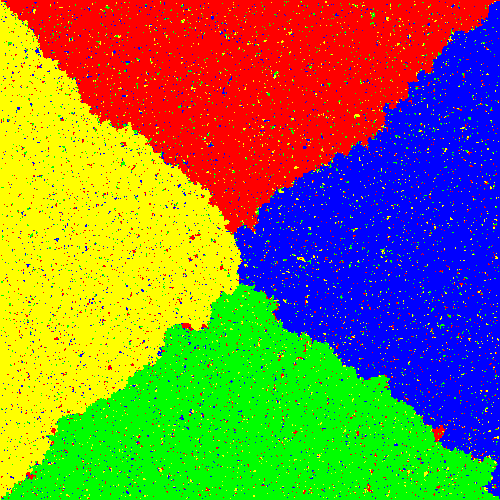}
\includegraphics[width=5cm]{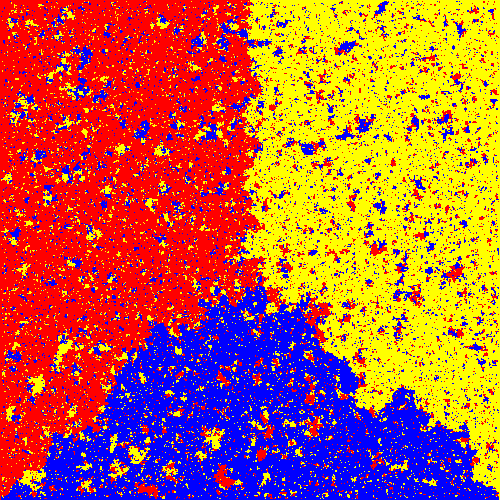}
 \caption
   {(Left, middle) Two realizations of the 1-2-3-4 boundary condition for the $q=4$
     Potts model. (Right) A realization of the 1-2-3 boundary condition
     for $q=3$ below the critical temperature. Simulation due to V.\ Beffara.}
\label{simu1234}
\end{figure}

Given these results, an argument \`a la Abraham and Reed should achieve to reach
mixtures of \emph{three} pure phases with boundary conditions : take the 1-2-3
boundary condition, and shift it with respect to the origin by a
vector $\mathbf{C}_{\mathbf{\alpha}}\sqrt{n}$, with a well-chosen
$\mathbf{C}=\mathbf{C}({\alpha_1,\alpha_2,\alpha_3})$ in order to bias
the limiting measure towards
$\alpha_1\bbP^1+\alpha_2\bbP^2+\alpha_3\bbP^3$; see Figure
\ref{simu1234} on the right. 

This would imply
that for any $\beta\geq0$,
\begin{equation*} 
\calW_{q,\beta}=\calG_{q,\beta}, \text{ for }q=2,3.
\end{equation*}

Starting at $q=4$, asking what the structure of $\calW$ is
(and if $\calW=\calG$) becomes a difficult
question. The study of Steiner forests gives one way to construct non-trivial
convex combinations of pure phases: we can look for symmetric domains and boundary
conditions which give rise to several possible Steiner trees
intersecting at some location. However, on the one hand quite little is known about
the structure of Steiner forests for a general norm on the plane, and
on the other hand, it is not clear if we can get all the
weak limiting states with this method. 

However, by adding a slowly growing number of boundary spins to the free
boundary condition, it might be possible to obtain continuous changes
of weights (thus biasing the mixture $\frac1q\sum_{i=1}^q\bbP^i$). Therefore, it seems to be reasonable to conjecture that
$\calW_{q,\beta}=\calG_{q,\beta}$ for the 2-dimensional Potts model
for any $q$. This question will be addressed in a forthcoming paper.

\subsubsection{Dimension 3 (and more)}
The large deviations results of Cerf and Pisztora \cite{CerPis2001} for the empirical
phase partition, which we already
mentioned in the previous subsection,  are valid for the Potts models for all $q\geq2$ in
dimension 3 below the critical temperature. For $q\leq4$, it is
conjectured that the (rescaled) surface tension converges to the Euclidean ball
as $\beta\downarrow\beta_c(q)$,
whereas for $q$ large (conjecturally up to $q=4$), this should not be the case as the phase transition
is of first order.

The macroscopic
phase separation surfaces are minimizing the surface tension $\tau$. Note that the geometry of interfaces is much more complicated in
systems with more than two phases. 
Moreover, very little is known about the surface tension in dimension 3, although
the following properties are
widely believed to be true: $\tau$ satisfies the sharp simplex
inequality (that is $\tau$ is uniformly convex), the value of $\tau$ is minimal in axis directions, and
$\tau$ increases as the normal vector moves from $(0,0,1)$ to
$(1,1,1)$. See the introduction of \cite{CerPis2001}.

The localization of the horizontal Dobrushin interface for the Potts model
($\omega_{(x,y,z)}=i$ if $z\geq0$ and $j$ if $z<0$) at low enough
temperature for $d\geq3$ has been proved by
Gielis and Grimmett \cite{GieGri2002}. 
A non translation invariant measure, corresponding to the coexistence of the ordered and disordered phases, is obtained by \v Cern\'y and
Koteck\'y \cite{CerKot2003} for the random cluster model at the
criticality $p_c(q)$ with sufficiently large $q$ (limit of finite
volume measures with free-wired Dobrushin boundary conditions).

Concerning $\rm{tr}\calG$, Martirosian \cite{Mar86} proved that for
any $d\geq2$ and $q$ large enough (depending on $d$) all the
translation invariant Gibbs states of the Potts model at $\beta>\beta_c(q,d)$ are convex
combinations of the pure phases $\bbP^i_{q,\beta}$, $i=1,\ldots,q$.

The state of the art concerning $\calW$ is thus even more restricted
than in the case of the Ising model. A natural guess is that a
similar result as Theorem \ref{result-ising}, namely
$\frac12(\bbP^{12}_{q,\beta}+\bbP^{21}_{q,\beta})\in\calG_{q,\beta}\backslash\calW_{q,\beta}$, is true for the Potts
model. Nevertheless, we argue in Section \ref{potts} why it is much
more difficult to prove (or disprove). Both outcomes would of course
be interesting.

\section{Proof of Theorem \ref{result-ising}}\label{proof}

Write as above $\mu=\frac12(\mu^\pm+\mu^\mp)$. 
We first use the
localization of the Dobrushin interface \cite{Dob72} in dimension
$d\geq3$ to deduce positive association of flipped spins
across the symmetry plane $z=0$. Using the bound
of Van Beijeren \cite{vBe1975} on the magnetization of a spin at
height 0, and the FKG inequality, we have
\begin{align}
m^\star_2&\leq\mu^\pm(\sigma_{(x,y,z)})\leq m^\star_3\hspace{1cm}\text{ for
}z\in\N^+,
\end{align}
where $m^\star_d=\mu^+(\sigma_0)$ in dimension $d$. Note that
$m^\star_d\to1$ as $\beta\to\infty$ for all $d$. Moreover, by symmetry,
$$ \mu^\pm(\sigma_{(x,y,z)}=-1)=\mu^\mp(\sigma_{(x,y,z)}=+1),$$
and hence
\begin{equation}\label{sym}
\mu(\sigma_{(x,y,z)}=+1)=\mu(\sigma_{(x,y,z)}=-1)=1/2.
\end{equation}
Let $\bfz=(0,0,z)$ for some $z\in\N^+$ and $\hat\bfz=(0,0,-z-1)$, two
points which are symmetric with respect to the plane $z=-1/2$.
Note that by symmetry $\mu^\pm(\sigma_\bfz)=-\mu^\pm(\sigma_{\hat\bfz})$.
By a union bound and \eqref{sym}, we have 

\begin{align}\label{loc}
\mu(\sigma_{\bfz}=+1\,|\,\sigma_{\hat\bfz}=-1)\nonumber
&=\frac{\mu(\sigma_{\bfz}=+1\,,\,\sigma_{\hat\bfz}=-1)}{\mu(\sigma_{\hat\bfz}=-1)}
\geq\frac{\frac12\mu^\pm(\sigma_{\bfz}=+1\,,\,\sigma_{\hat\bfz}=-1)}{1/2}\nonumber\\
&\geq 1-\mu^\pm(\sigma_{\bfz}=-1)-\mu^\pm(\sigma_{\hat\bfz}=+1)\nonumber\\
&=
1-\frac{1-\mu^\pm(\sigma_{\bfz})}2-\frac{1+\mu^\pm(\sigma_{\hat\bfz})}2
=\mu^\pm(\sigma_{\bfz})\geq m^\star_2
\end{align}

Now suppose that $\mu$ is a weak limit of finite-volume measures,
i.e. $\mu=\lim_{n\to\infty}\mu^{\omega_n}_{\Lambda_n}$ for some
deterministic sequence of boundary conditions $(\omega_n)_n$ and
$\Lambda_n\uparrow\Z^d$. As every $\mu^{\omega_n}_{\Lambda_n}$ satisfy
the FKG inequality, so does $\mu$. Which implies,
\begin{equation}
\mu(\sigma_{\bfz}=+1|\sigma_{\hat\bfz}=-1)\leq
\mu(\sigma_{\bfz}=+1)
=1/2.
\end{equation}
This is a contradiction with \eqref{loc} as soon as
$\beta$ is large enough that $m^\star_2(\beta)>1/2$. Note that if we take $z$ large, we can actually
replace the bound in \eqref{loc} by $m^\star_3(1-\ep)$, with some
$\ep=\ep(\beta,z)\to0$ as $z\to\infty$. The
contradiction holds then as soon as $\beta$ is large enough that $m^\star_3(\beta)>1/2$.
\qed

\begin{remark}\label{proof2}
 We could have used ``negative association of the same value of spin''
 across the Dobrushin interface, namely the
 following inequality holds as well:
\begin{align}
\mu(\sigma_{\bfz}=+1|\sigma_{\hat\bfz}=+1)
\leq1-m^\star_2
\end{align}
and is in contradiction with the FKG inequality:
\begin{equation*}
\mu(\sigma_{\bfz}=+1|\sigma_{\hat\bfz}=+1)\geq
\mu(\sigma_{\bfz}=+1)=1/2.
\end{equation*}
As we will see in the next section, this ``second proof'' gives a
priori two hopes of extending the result to the Potts model. However,
none of them works.
\end{remark}
\begin{remark}
{The result holds for all vertical translates and axis-symmetry of the Dobrushin
  boundary condition, as well as for
  $\mu=\alpha\mu^\pm+(1-\alpha)\mu^{\mp}$, with $\alpha\in(0,1)$ to be
  chosen such that}
$$\frac{m^\star_2}{\frac{1+m^\star_3}2+\frac{1-\alpha}\alpha\frac{1-m^\star_2}2}>\frac12.$$
\end{remark}

\begin{remark} 
It is possible to make the contradiction hold up to the
  roughening temperature $\beta_R$ of the 3-dimensional Ising model,
  by looking at the proportion of $+$ spins in large but finite boxes
  in the two half-spaces.
Recall that by
  \cite{vBe1975} we have $0<\beta_c(3)\leq\beta_R<\beta_c(2)$.

Let $\Lambda_m(\bfz)$ be the box of (odd) side-length $m< z$ centered at
$\bfz$. Denote by $M^m_\bfz$ the majority of the spins inside
$\Lambda_m(\bfz)$, namely
$$ M^m_\bfz=\left\{\begin{array}{ll}+1&\text{if }\sharp\{i\in\Lambda_m(\bfz) :
  \sigma_i=+1\}>\sharp\{i\in\Lambda_m(\bfz) : \sigma_i=-1\}\\
-1&\text{else,}\end{array}\right.$$
where $\sharp X$ denotes the cardinality of the set $X$. Then, by the
same computation as in \eqref{loc}, we have on the one hand
\begin{equation*}
\mu(M^m_\bfz=+1,M^m_{\hat\bfz}=-1)\geq
\frac12(1-2\mu^\pm(M^m_\bfz=-1))\geq\frac12(1-2\ep),
\end{equation*}
with $\ep=\ep(z,m)$ being small in $z$ and $m$ as soon as
$\beta>\beta_R$.
And on the other hand, always by symmetry, $\mu(M^m_{\hat\bfz}=-1)=\frac12$.
So that
\begin{equation}\label{locM}
\mu(M^m_\bfz=+1\,|\,M^m_{\hat\bfz}=-1)\geq1-2\ep.
\end{equation}
As the event $\{M^m_\bfz=+1\}$ (resp. $\{M^m_{\hat\bfz}=-1\}$) is
increasing (resp. decreasing), the FKG inequality implies
\begin{equation*}
\mu(M^m_\bfz=+1\,|\,M^m_{\hat\bfz}=-1)\leq \mu(M^m_\bfz=+1)\leq\frac12,
\end{equation*}
which is in contradiction with \eqref{locM} as soon as
$\beta>\beta_R$, if $z$ and $m$ are taken sufficiently large.
\qed
\end{remark}

\section{The state of affairs for the Potts model}\label{potts}

\subsection{Absence of certain spin correlation inequalities in
  presence of boundary conditions}

The direct generalization of the proof
of Theorem \ref{result-ising} to the Potts model is not possible,
because the needed correlation inequalities
break down for non-free boundary conditions. Indeed, we provide counter-examples to the FKG inequality for the fuzzy Potts measure with
non-free boundary conditions; see \cite{KahWei2007} for the case of
free boundary conditions.


 First, for the Potts model with free
  boundary conditions, on any finite graph $\Lambda$, Schonmann \cite{Sch1988} proved that the following correlation inequality
  holds, which we could name ``negative association of different kinds
  of spins''. For any $i\neq j\in\{1,\ldots,q\}$, and $A,B\subset\Lambda$,
\begin{equation}\label{corr-ab}
\bbP^{\varnothing}_{q,\beta,\Lambda}\left(\prod_{x\in A}\ind{\sigma_x=i} \middle| \prod_{y\in
  B}\ind{\sigma_y=j}\right)\leq
\bbP^{\varnothing}_{q,\beta,\Lambda}\left(\prod_{x\in A}\ind{\sigma_x=i} \right).
\end{equation}

\begin{proposition}
There exist boundary conditions $\omega$ such that \eqref{corr-ab} does not hold for $\bbP^\omega_{q,\beta,\Lambda}$.
\end{proposition}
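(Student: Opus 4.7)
The plan is to exhibit a concrete small example where \eqref{corr-ab} fails, exploiting the intuition that a boundary condition strongly favouring a third colour can induce positive correlation between the two ``rare violation'' events $\{\sigma_x=i\}$ and $\{\sigma_y=j\}$. Specifically, I would take $q=3$, let $\Lambda=\{x,y\}\subset\Z^d$ consist of two nearest neighbours, and impose the pure boundary condition $\omega_u=3$ for every $u\in\Lambda^c$. Choose $A=\{x\}$, $B=\{y\}$, $i=1$, $j=2$, and aim to show that for $\beta$ large enough,
\[
\bbP^\omega_{3,\beta,\Lambda}(\sigma_x=1\mid\sigma_y=2)>\bbP^\omega_{3,\beta,\Lambda}(\sigma_x=1).
\]

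Since $|\Lambda|=2$, everything is explicit. Writing $N=2d-1$ for the common number of external neighbours of $x$ and of $y$, the nine unnormalised weights are
\[
w(a,b)=\exp\bigl(\beta N\delta_{a,3}+\beta N\delta_{b,3}+\beta\delta_{a,b}\bigr),\qquad a,b\in\{1,2,3\},
\]
and the partition function is $Z=2e^\beta+2+4e^{N\beta}+e^{(2N+1)\beta}$. By the symmetry exchanging $x\leftrightarrow y$ and $1\leftrightarrow 2$, the two marginals coincide, $\bbP^\omega(\sigma_x=1)=\bbP^\omega(\sigma_y=2)=(e^\beta+1+e^{N\beta})/Z$, while the joint reads $\bbP^\omega(\sigma_x=1,\sigma_y=2)=1/Z$. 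Clearing $Z^2$, the desired strict inequality becomes
\[
Z>\bigl(e^\beta+1+e^{N\beta}\bigr)^2,
\]
which is the only real piece of work: the left-hand side grows like $e^{(2N+1)\beta}$ while the right-hand side grows only like $e^{2N\beta}$, so it holds for $\beta$ sufficiently large (e.g.\ already $\beta\geq\log 4$ when $N=1$, and any $d$ then works by choosing $\beta$ large in terms of $N$).

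Physically, the effect is transparent: under $\omega$, both $\{\sigma_x=1\}$ and $\{\sigma_y=2\}$ are small-probability events consisting of a single deviation from the boundary-preferred colour $3$. Conditioning on one such deviation strips the dominant ``$\sigma_x=\sigma_y=3$'' contribution out of the normalisation, so among the remaining relative weights the other deviation becomes more rather than less likely. The same construction accommodates any $q\geq 3$ with $\{1,2,3\}$ replaced by $\{i,j,k\}$ for any third boundary colour $k\notin\{i,j\}$. The proposition being a pure existence statement, I do not expect any substantive obstacle beyond the asymptotic comparison above; it is however worth emphasising that the failure of \eqref{corr-ab} genuinely requires both low temperature and the presence of a third colour distinct from $i$ and $j$, which is precisely the extra input missing from Schonmann's setting \cite{Sch1988} with free boundary conditions.
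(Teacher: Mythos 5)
Your proof is correct, and it takes a genuinely different route from the paper. Your computation checks out: with $N=2d-1$ external neighbours per site, the nine weights $w(a,b)=\exp(\beta N\delta_{a,3}+\beta N\delta_{b,3}+\beta\delta_{a,b})$ and $Z=2e^\beta+2+4e^{N\beta}+e^{(2N+1)\beta}$ are right, the violation of \eqref{corr-ab} reduces exactly to $Z>(e^\beta+1+e^{N\beta})^2$, and since $2N+1>\max(2N,N+1)$ this holds for all $\beta$ large (note that $F(\beta)=e^{(2N+1)\beta}-e^{2N\beta}-2e^{(N+1)\beta}+2e^{N\beta}-e^{2\beta}+1$ vanishes at $\beta=0$ with $F'(0)=-3$, so your example really does require low temperature, confirming your closing remark). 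The paper proves the proposition by an entirely different, non-elementary argument: it works on $\Z^2$ at any \emph{fixed} supercritical $\beta$, takes the 1-2-3-4 boundary condition on a large box $\Lambda_n$, and invokes the machinery of \cite{CoqDumIofVel2014} (uniform convexity of the surface tension \cite{CamIofVel08} plus the geometric argument of \cite{AlfCon91}) to show the typical interfaces concentrate near one of two Steiner trees, giving $\bbP^{1234}_{\Lambda_n}(\sigma_x=1\mid\sigma_y=3)\geq 1-\ep$ while $\bbP^{1234}_{\Lambda_n}(\sigma_x=1)\leq\frac12+\ep$. Your two-site example buys elementarity, self-containedness, validity for every $q\geq3$ and $d\geq2$, and a transparent mechanism (conditioning on $\sigma_y=2$ strips the dominant $\sigma_x=\sigma_y=3$ term out of the normalisation); what it does not give, and what the paper's example does, is failure of \eqref{corr-ab} \emph{in the bulk} — at sites macroscopically far from $\partial\Lambda$, uniformly in the box size, at fixed $\beta>\beta_c$ — which is the regime relevant to the paper's surrounding discussion, since the Ising-style weak-limit argument is applied at fixed $\bfz,\hat\bfz$ as $\Lambda_n\uparrow\Z^d$, and a boundary-dominated counterexample alone would leave open the hope that the inequality still holds asymptotically deep inside the box. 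As stated, however, the proposition is a pure existence claim and your proof fully establishes it; your observation that a third colour is genuinely necessary is corroborated by Proposition \ref{corr-bicolor}, which shows \eqref{corr-ab} does survive for all bicolor boundary conditions $\omega\in\{i,j,\varnothing\}^{\partial\Lambda}$.
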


\begin{proof}
Here is a counter-example, based on the
analysis of subcritical Gibbs states of the Potts model on $\Z^2$.
For the boundary condition $\omega=$1-2-3-4 depicted in Figure \ref{steiner}, at fixed supercritical
$\beta$, in a sufficiently large box, the typical interfaces are
concentrated around two possible deterministic Steiner trees, and undergo Brownian
fluctuations around these objects; see also Figure \ref{simu1234}. 

\begin{figure}[h!]
  \centering
 \includegraphics[width=4.7cm]{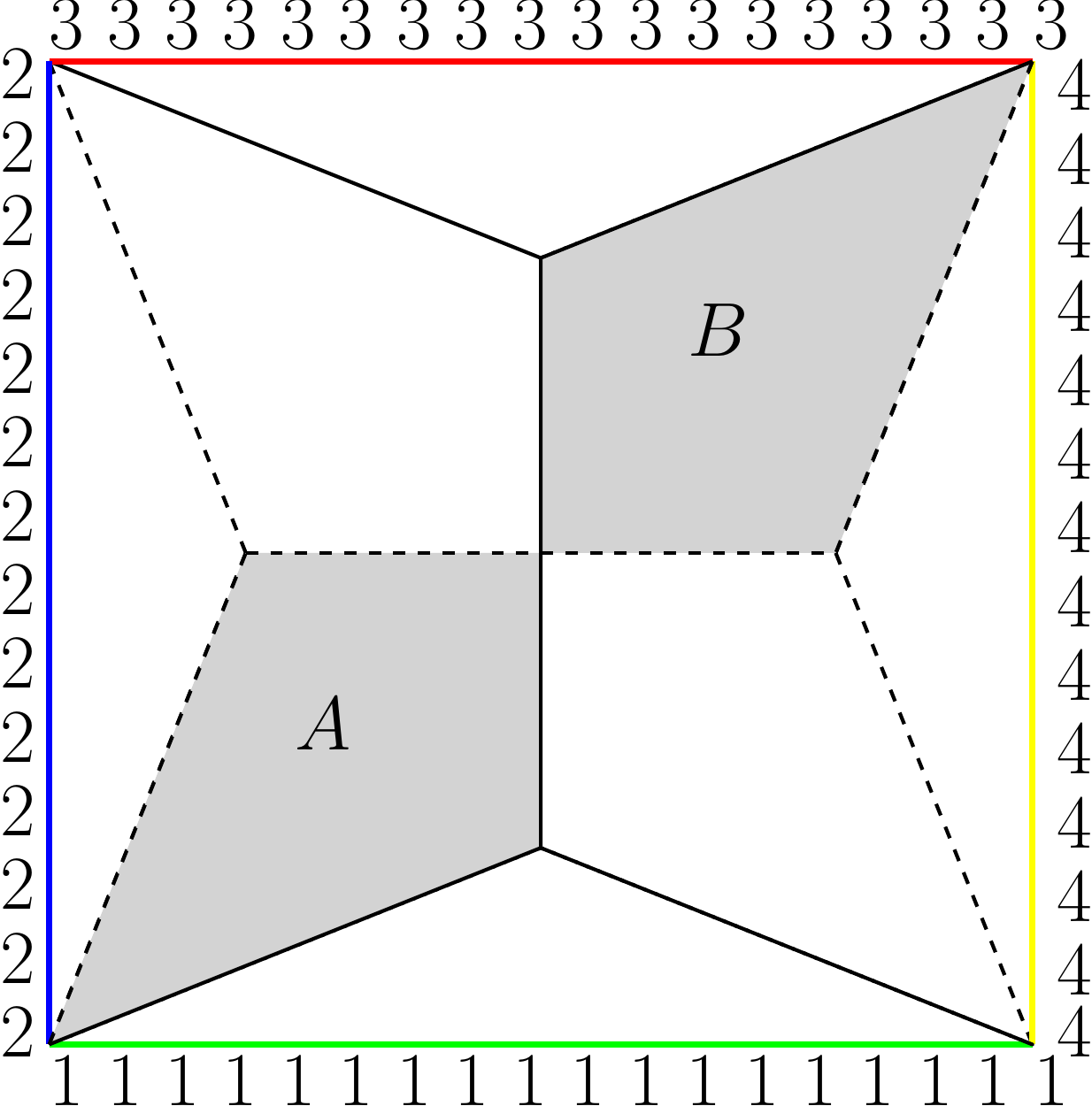}
 \caption
   {The two possible Steiner trees (solid and dashed lines) for the 1-2-3-4 boundary condition.}
\label{steiner}
\end{figure}

Indeed, by uniform convexity of the
  Wulff shape \cite{CamIofVel08}, these two trees are shorter
than the spanning minimal tree (consisting of three sides of the box),
which would have 90 degrees between its branches; see \cite{AlfCon91}. Therefore, for some $x$ in the
region $A$ and some $y$ in the region $B$, at large enough $n$, we have :
\begin{equation}
\bbP^{1234}_{\Lambda_n}(\sigma_x=1|\sigma_y=3)\geq1-\ep
\quad\text{ but }\quad
\bbP^{1234}_{\Lambda_n}(\sigma_x=1)\leq \frac12+\ep<1,
\end{equation}
with $\ep=\ep(\beta)\to0$ as $\beta\to\infty$, which contradicts
\eqref{corr-ab}, and also shows that the analogue of the
``first'' proof for the Ising model cannot be extended to the Potts
model.
Counter-examples of this kind exist for $q=3$.
\end{proof}

 Secondly, Schonmann \cite{Sch1988} also proves ``positive association of the same kind
  of spins'' for the Potts model
  with free boundary conditions on any finite graph. That is, for any $i\in\{1,\ldots,q\}$,
\begin{equation}\label{corr-aa}
\bbP^{\varnothing}_{q,\beta,\Lambda}\left(\prod_{x\in A}\ind{\sigma_x=i} \middle| \prod_{y\in
  B}\ind{\sigma_y=i}\right)\geq
\bbP^{\varnothing}_{q,\beta,\Lambda}\left(\prod_{x\in A}\ind{\sigma_x=i}\right ).
\end{equation}
However, if soap-film-like surfaces are minimal surfaces for the surface tension of the
3 dimensional Potts model (which is widely believed to be true), then
\eqref{corr-aa} does not hold for the 1-2-3-4-5-6 as boundary condition, namely a different color on each
face of a cube, see Figure \ref{steiner-soap}. The typical interfaces should be
concentrated around one of three possible minimal surfaces, each one
having a little square aligned with one coordinate axis. For some $x$ and $y$ in well-chosen regions (more precisely in the interior of two different parts of the symmetric difference between
the locations of a phase in two Steiner surfaces), we can get
$\bbP^{123456}_{\Lambda_n}(\sigma_x=1|\sigma_y=1)\leq 1/2+\ep$ whereas $\bbP^{123456}_{\Lambda_n}(\sigma_x=1)\geq 2/3-\ep>1/2$
which contradicts \eqref{corr-aa}, and shows that the analogue of the
``second'' proof for the Ising model, mentioned in Remark \ref{proof2}, cannot be extended to the Potts
model.

\begin{figure}[h!]
  \centering
\begin{tabular}{cc}
\includegraphics[width=4.6cm]{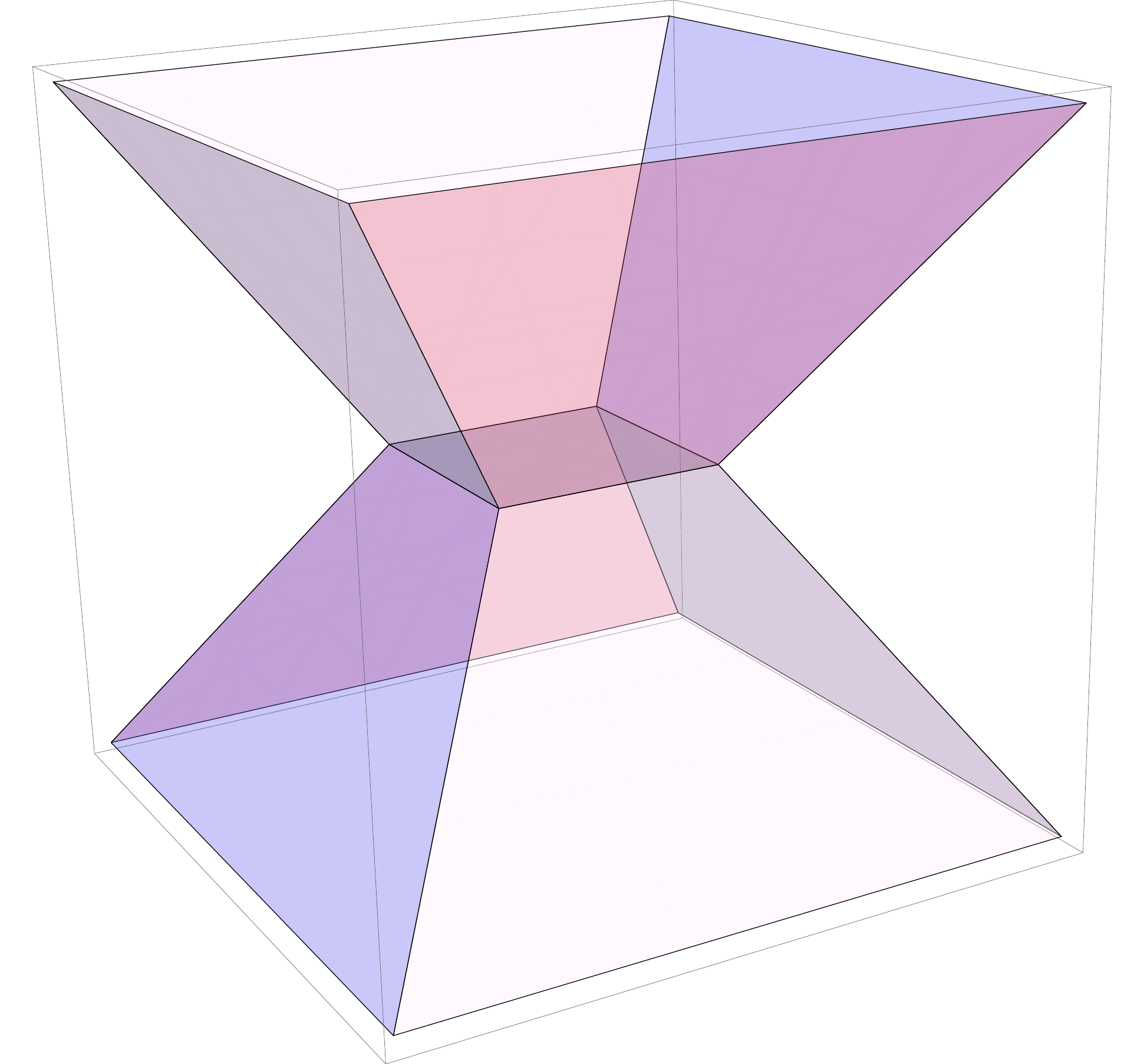}
\includegraphics[width=4.2cm]{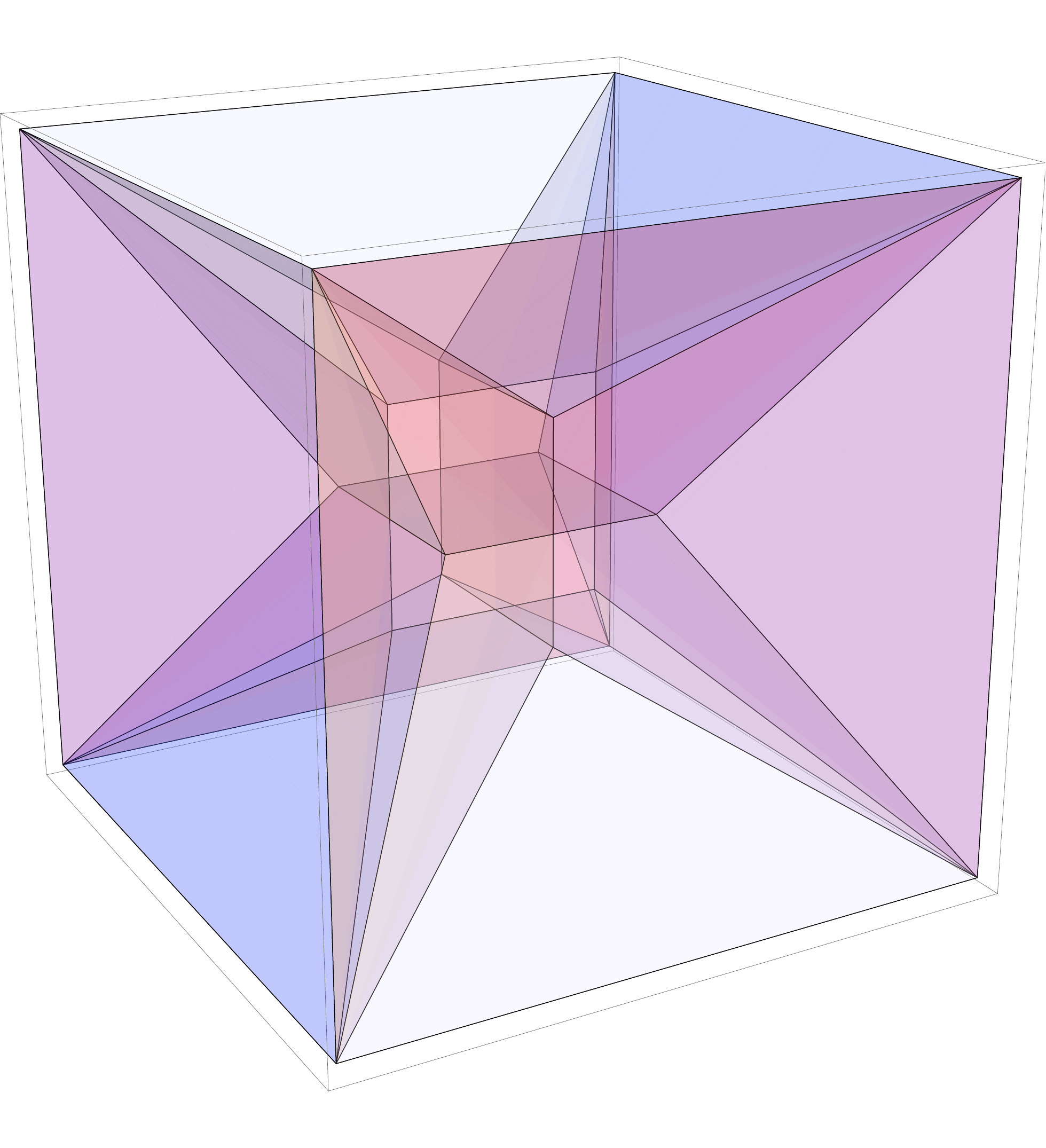}
\end{tabular}
 \caption
   {(Left) the soap film in direction $z$; (Right) the three possible
     soap films in directions $x,y$ and $z$.
}
\label{steiner-soap}
\end{figure}


Note that the existence of at least two Steiner
trees for which the locations of a pure phase have a non-empty
symmetric difference is enough to provide a 2d counter-example to
\eqref{corr-aa}.

\subsection{Correlation inequalities for specific boundary conditions
  and exclusion of certain weak limits}

 In \cite{VdBHagKah2006}, van den Berg \emph{et al.} proved some conditional correlation
  inequalities for the random cluster model on finite
  graphs $\Lambda=(V,E)$; 
see \cite{Gri2006} for the definition and a review on the
  random cluster model. For $S\subset V$, let $C_S$ denote the set of edges belonging to open paths starting
at vertices of $S$. They show that for $q\geq1$, if $S$ and
  $T$ are disjoint sets of vertices and $f$ and $g$ functions of the
  clusters of $S$ and $T$, written $(C_S,C_T)$, each increasing in
  $C_S$ and decreasing in $C_T$, then,
\begin{equation}\label{cond-corr-rcm}
\phi_{q,p,\Lambda}(fg | S\nleftrightarrow T)\geq \phi_{q,p,\Lambda}(f |
S\nleftrightarrow T) \cdot \phi_{q,p,\Lambda}(g | S\nleftrightarrow T)
\end{equation}
where $\phi_{q,p}$ denotes the random cluster model with parameters
$q$ and $p$ on the graph $\Lambda$.

We prove here that this result implies the
correlation inequality \eqref{corr-ab} in the Potts model with certain
specific boundary conditions. 

\begin{proposition}\label{corr-bicolor}
For $q\geq 2$, subsets $A,B\subset\Lambda\Subset\Z^d$, for any $i\neq j\in\{1,\ldots,q\}$ and any
bicolor boundary condition $\omega\in\{i,j,\varnothing\}^{\partial\Lambda}$, we have
\begin{equation}\label{corr-ab-omega}
\bbP^{\omega}_{q,\beta,\Lambda}\left(\prod_{x\in A}\ind{\sigma_x=i} \middle| \prod_{y\in
  B}\ind{\sigma_x=j}\right)\leq
\bbP^{\omega}_{q,\beta,\Lambda}\left(\prod_{x\in A}\ind{\sigma_x=i} \right).
\end{equation}
\end{proposition}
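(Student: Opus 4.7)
The plan is to exploit the Edwards--Sokal coupling of the bicolor Potts model with the random cluster (FK) representation, and to derive the required negative correlation from \eqref{cond-corr-rcm}. Write $S:=\{x\in\partial\Lambda:\omega_x=i\}$ and $T:=\{x\in\partial\Lambda:\omega_x=j\}$, extend $\Lambda$ by two ghost vertices $s,t$ wired respectively to $S$ and $T$, and let $\phi^\omega$ denote the FK measure on this extended graph. Since $C_s$ carries colour $i$ and $C_t$ colour $j$ under the coupling, only configurations with $s\not\leftrightarrow t$ produce well-defined bicolor Potts samples, obtained from $\phi^\omega(\cdot\,|\,s\not\leftrightarrow t)$ by colouring every free cluster independently and uniformly in $\{1,\ldots,q\}$. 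A standard computation yields
\begin{equation*}
\bbP^\omega_{q,\beta,\Lambda}(\sigma_A\equiv i)=\phi^\omega\!\left(\ind{A\not\leftrightarrow t}\,q^{-N_A}\,\middle|\,s\not\leftrightarrow t\right),
\end{equation*}
and the symmetric identity for $B,j$, where $N_A$ is the number of free clusters meeting $A$.

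Conditional on an FK configuration, the events $\{\sigma_A\equiv i\}$ and $\{\sigma_B\equiv j\}$ can hold simultaneously only if no free cluster meets both $A$ and $B$, since such a cluster would be forced to carry both colours at once. Setting $F:=\ind{A\not\leftrightarrow t}\,q^{-N_A}$ and $G:=\ind{B\not\leftrightarrow s}\,q^{-N_B}$, this gives the pointwise bound
\begin{equation*}
\bbP^\omega(\sigma_A\equiv i,\sigma_B\equiv j)\le\phi^\omega(FG\,|\,s\not\leftrightarrow t),
\end{equation*}
and reduces the proposition to the negative correlation $\phi^\omega(FG\,|\,s\not\leftrightarrow t)\le\phi^\omega(F\,|\,s\not\leftrightarrow t)\,\phi^\omega(G\,|\,s\not\leftrightarrow t)$, which is precisely of the type addressed by \eqref{cond-corr-rcm}.

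To apply \eqref{cond-corr-rcm} with $S:=\{s\}$, $T:=\{t\}$, one needs $F$ and $G$ (or more precisely, the Potts events they represent) to become functions of $(C_s,C_t)$ with opposite monotonicities. The natural device is to pass to the augmented space where the free clusters have already received their random colours: for every colour $c\notin\{i,j\}$ adjoin a further ghost $v_c$ wired to every vertex lying in a free cluster coloured $c$ (and identify $v_i$ with $s$, $v_j$ with $t$). In this augmented FK-type measure $\{\sigma_A\equiv i\}$ becomes the purely geometric event $\{A\subseteq C_s\}$, which is increasing in $C_s$ and decreasing in $C_t$ (positive class), while $\{\sigma_B\equiv j\}=\{B\subseteq C_t\}$ sits in the negative class. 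Applying \eqref{cond-corr-rcm} to a positive-class and a negative-class function, using the standard substitution $g\mapsto M-g$ to convert the latter into the positive class and reverse the direction of the correlation inequality, then yields exactly the required bound.

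The main technical obstacle is precisely that $N_A$ and $N_B$ are functions of the whole cluster decomposition, not of $(C_s,C_t)$ alone, which forbids any direct application of \eqref{cond-corr-rcm} to $F$ and $G$ themselves. The augmentation described above is therefore indispensable to recast $q^{-N_A}$ as a true connectivity probability; carrying it out carefully and checking that the enlarged edge-plus-colour measure is still an FK-type measure to which \eqref{cond-corr-rcm} applies (in particular that the announced monotonicities in $(C_s,C_t)$ truly hold) is the heart of the argument.
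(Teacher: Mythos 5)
Your reduction is exactly the paper's: via the Edwards--Sokal coupling, the bicolor condition makes the conditioning event $cond(\omega)=\bigcap_{k\neq l}\{E_k(\omega)\nleftrightarrow E_l(\omega)\}$ collapse to the single event $\{E_i\nleftrightarrow E_j\}$ (your $\{s\nleftrightarrow t\}$), the marginals become $\phi^\omega\bigl(F\,\vert\,s\nleftrightarrow t\bigr)$ and $\phi^\omega\bigl(G\,\vert\,s\nleftrightarrow t\bigr)$ with $F=\ind{A\nleftrightarrow t}\,q^{-N_A}$ and $G=\ind{B\nleftrightarrow s}\,q^{-N_B}$, and the joint Potts probability is bounded pointwise by $FG$. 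This part is correct, and indeed somewhat more carefully stated than the paper's displayed $f_A,f_B$ (your indicator $\ind{A\nleftrightarrow t}$ and the pointwise inequality for the joint are needed there too). As in the paper, everything thus reduces to the negative correlation $\phi^\omega\bigl(FG\,\vert\,s\nleftrightarrow t\bigr)\leq\phi^\omega\bigl(F\,\vert\,s\nleftrightarrow t\bigr)\,\phi^\omega\bigl(G\,\vert\,s\nleftrightarrow t\bigr)$.

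The gap is that you never prove this inequality. The paper closes it by checking the hypotheses of \eqref{cond-corr-rcm} \emph{directly} on $f_A$ and $f_B$: it verifies that $f_A$ is increasing in the connectedness of $A\cup E_i$, hence increasing in $C_{E_i}$ and decreasing in $C_{E_j}$ (enlarging $C_{E_i}$ absorbs free clusters meeting $A$ and so decreases $N_A$, while enlarging $C_{E_j}$ can only kill the indicator), with the reverse monotonicities for $f_B$, and then applies \eqref{cond-corr-rcm} after the standard class flip $g\mapsto M-g$ --- no augmentation of the graph appears anywhere. You instead declare the direct application impossible (a reasonable worry if one reads ``functions of $(C_S,C_T)$'' in \eqref{cond-corr-rcm} strictly, since $N_A$ also sees the free clusters) and propose adjoining ghosts $v_c$ wired to the free clusters coloured $c$. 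As described this construction does not make sense as a substitute: the wiring of $v_c$ depends on the random colouring, so the ``enlarged edge-plus-colour measure'' is not a random-cluster measure on any fixed graph, and \eqref{cond-corr-rcm} --- a statement about random-cluster measures conditioned on $\{S\nleftrightarrow T\}$ --- cannot be invoked for it; moreover, revealing the colours replaces the conditioning event $\{s\nleftrightarrow t\}$ by pairwise non-connection among $q$ ghosts, and it is then no longer clear (nor argued) that the resulting conditional expectations reproduce the Potts probabilities in \eqref{corr-ab-omega}. Since you explicitly defer exactly this verification as ``the heart of the argument'', the decisive step is missing; the way to complete your write-up is not the augmentation but the monotonicity check on $F$ and $G$ themselves, which is what the paper records as ``one can check''.
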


\begin{proof}
The well-known Edwards-Sokal
coupling \cite{EdwSok1988} implies
that the Potts measure $\bbP^\omega_{q,\beta,\Lambda}$ is coupled to the
random cluster measure
$\phi_{q,p,\Lambda}(\cdot | \cap_{i\neq j} \{E_i(\omega)\nleftrightarrow E_j(\omega)\})$
with $p=1-e^{-\beta}$ and $E_i(\omega)=\{x\in\partial\Lambda : \omega_x=i\}$.
Let us write $cond(\omega)=\cap_{i\neq j}
\{E_i(\omega)\nleftrightarrow E_j(\omega)\}$ and $\kappa(X)$ for the number of connected components of the set
$X$. Then,
\begin{align*}
\bbP^{\omega}_{q,\beta,\Lambda}\left(\prod_{x\in A}\ind{\sigma_x=i}  \prod_{y\in
  B}\ind{\sigma_x=j}\right)&=
\phi_{q,p,\Lambda}\left(f_A\cdot f_B | cond(\omega)\right)\\
\text{with }\quad f_A&=\sum_{X\subset A}\sum_{c=0}^{|A\backslash
    X|}\frac1{q^c}\ind{X\leftrightarrow E_i}\ind{A\backslash
    X\nleftrightarrow E_i}\ind{\kappa(A\backslash X)=c}\\
f_B&=\sum_{Y\subset B}\sum_{c'=0}^{|B\backslash
    Y|}\frac1{q^{c'}}\ind{Y\leftrightarrow E_j}\ind{B\backslash
    Y\nleftrightarrow E_j}\ind{\kappa(B\backslash Y)=c'}
\end{align*}
The key remark is that, for any $q\geq 2$, for any
bicolor boundary condition $\omega\in\{i,j,\varnothing\}^{\partial\Lambda}$, we have 
$cond(\omega)=\{E_i\nleftrightarrow E_j\}$, and so
\eqref{cond-corr-rcm} ensures that $f_A$ and $f_B$ are negatively
correlated. Indeed, one can check that $f_A$ is ``increasing in the
connectedness'' of the graph $A\cup E_i$ (resp.\ $f_B$ is ``increasing in the
connectedness'' of the graph $B\cup E_j$), which implies that
 $f_A$ is increasing in $C_{E_i}$ (and decreasing in $C_{E_j}$), and that
 $f_B$ is increasing in $C_{E_j}$ (and decreasing in $C_{E_i}$).
Therefore, using \eqref{cond-corr-rcm}, we get \eqref{corr-ab-omega}.
\end{proof}

Proposition \ref{corr-bicolor} gives a partial result for the Potts model
concerning the existence of non-weak limit states,
counterpart of Theorem \ref{result-ising}.\\

\begin{proposition} The measure $\bbP=\frac12(\bbP^{12}+\bbP^{21})$,
  mixture of Dobrushin states for the 3-dimensional Potts model, is not a weak limit of finite-volume measures with
boundary conditions $\omega\in\{1,2,\varnothing\}^{\partial\Lambda}$.
\end{proposition}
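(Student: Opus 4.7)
The plan is to mimic the proof of Theorem \ref{result-ising}, using Proposition \ref{corr-bicolor} in place of the Ising FKG inequality, and the Gielis--Grimmett \cite{GieGri2002} localization of the horizontal Potts Dobrushin interface in place of Dobrushin's result. Pick $\bfz=(0,0,z)$ and $\hat\bfz=(0,0,-z-1)$ with $z\in\N^+$, symmetric about the plane $z=-1/2$. Gielis--Grimmett, combined with Van Beijeren-type magnetization bounds, provides a quantity $m^\star=m^\star(q,d,\beta)$ with $m^\star\to1$ as $\beta\to\infty$, such that $\bbP^{12}(\sigma_\bfz=1)\geq m^\star$ and $\bbP^{12}(\sigma_{\hat\bfz}=2)\geq m^\star$ for large $\beta$; by symmetry between the two Dobrushin states, $\bbP^{21}(\sigma_\bfz=2)\geq m^\star$ and $\bbP^{21}(\sigma_{\hat\bfz}=1)\geq m^\star$.

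Next I would estimate the mixture. Since all spin values other than $1$ are unlikely under $\bbP^{12}$ at $\bfz$, and similarly for the other three probabilities above, a union bound gives
\[
\bbP(\sigma_\bfz=1,\sigma_{\hat\bfz}=2)\geq\tfrac12\bbP^{12}(\sigma_\bfz=1,\sigma_{\hat\bfz}=2)\geq\tfrac12(2m^\star-1),
\]
while $\bbP(\sigma_{\hat\bfz}=2)=\tfrac12(\bbP^{12}(\sigma_{\hat\bfz}=2)+\bbP^{21}(\sigma_{\hat\bfz}=2))\leq 1-m^\star/2$, since $\bbP^{21}(\sigma_{\hat\bfz}=2)\leq 1-m^\star$. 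Thus
\[
\bbP(\sigma_\bfz=1\mid\sigma_{\hat\bfz}=2)\geq\frac{2m^\star-1}{2-m^\star}\xrightarrow[\beta\to\infty]{}1.
\]
On the other hand, the same kind of bounds give $\bbP(\sigma_\bfz=1)\leq 1-m^\star/2$, which tends to $1/2$ as $\beta\to\infty$.

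Now I would derive the contradiction. If $\bbP=\wlim_n\bbP^{\omega_n}_{q,\beta,\Lambda_n}$ for some sequence of bicolor boundary conditions $\omega_n\in\{1,2,\varnothing\}^{\partial\Lambda_n}$, then Proposition \ref{corr-bicolor} applied with $i=1$, $j=2$, $A=\{\bfz\}$, $B=\{\hat\bfz\}$ yields
\[
\bbP^{\omega_n}_{q,\beta,\Lambda_n}(\sigma_\bfz=1\mid\sigma_{\hat\bfz}=2)\leq\bbP^{\omega_n}_{q,\beta,\Lambda_n}(\sigma_\bfz=1).
\]
Both sides depend on local events, and the denominator $\bbP^{\omega_n}(\sigma_{\hat\bfz}=2)$ converges to $\bbP(\sigma_{\hat\bfz}=2)\geq m^\star/2>0$, so passing to the limit preserves the inequality: $\bbP(\sigma_\bfz=1\mid\sigma_{\hat\bfz}=2)\leq\bbP(\sigma_\bfz=1)$. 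Combining with the two estimates above, this forces $\frac{2m^\star-1}{2-m^\star}\leq 1-\frac{m^\star}{2}$, an inequality that fails for $m^\star$ sufficiently close to $1$ (explicitly, for $m^\star>4-\sqrt{10}$), hence for $\beta$ large enough (depending on $q$ and $d$).

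The main obstacle is really in the statement, not the proof: Proposition \ref{corr-bicolor} restricts us to boundary conditions drawn from only two Potts colors plus free. A Potts-ferromagnetic analogue of FKG valid for arbitrary $\omega\in\{1,\ldots,q\}^{\partial\Lambda}$ is not available, which is precisely why the stronger counterpart of Theorem \ref{result-ising} for general boundary conditions remains open, as discussed in Section \ref{potts}. Secondary care is only needed in tracking that the localization bounds from \cite{GieGri2002} indeed yield an $m^\star$ exceeding the threshold $4-\sqrt{10}$ at low enough temperature, which is straightforward since $m^\star\to 1$.
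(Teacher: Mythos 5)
Your proposal is correct and follows essentially the same route as the paper's proof: localization of the Potts Dobrushin interface via Gielis--Grimmett to get $\bbP(\sigma_\bfz=1\,|\,\sigma_{\hat\bfz}=2)$ close to $1$, the bicolor correlation inequality of Proposition \ref{corr-bicolor} applied to the finite-volume measures, and preservation of that inequality under the weak limit to force $\bbP(\sigma_\bfz=1\,|\,\sigma_{\hat\bfz}=2)\leq\bbP(\sigma_\bfz=1)\approx\tfrac12$, a contradiction at low temperature. The only difference is cosmetic: you track explicit constants (the bound $\frac{2m^\star-1}{2-m^\star}$ and the threshold $m^\star>4-\sqrt{10}$) where the paper uses an unspecified $\ep(\beta)\to0$, and your remark that the denominator $\bbP(\sigma_{\hat\bfz}=2)\geq m^\star/2>0$ justifies passing the conditional inequality to the limit is a valid (slightly more careful) version of the paper's ``this inequality being preserved by weak limits.''
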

\begin{proof}
We adapt the ``first'' proof for the Ising model, and keep the same notations. Localization of the Dobrushin interface at low enough temperature is also known for the Potts
model \cite{GieGri2002}. Therefore,
\begin{align}\label{loc-potts}
\bbP(\sigma_{\bfz}=1|\sigma_{\hat\bfz}=2)
\geq 1-\ep
\end{align}
On the other hand, suppose that $\bbP$ is a weak limit of finite-volume
measures $\bbP^{\omega_n}_{q,\beta,\Lambda_n}$ for some
deterministic sequence of boundary conditions
$(\omega_n)_n\in\{1,2,\varnothing\}^{\partial\Lambda}$ and some boxes
$\Lambda_n\uparrow\Z^d$. By \eqref{corr-ab-omega}, every
$\bbP^{\omega}_{q,\beta,\Lambda}$ with $\omega\in\{1,2,\varnothing\}^{\partial\Lambda}$ satisfies
\begin{equation}\label{ferro1}
\bbP^{\omega}_{q,\beta,\Lambda}(\sigma_\bfz=1 | \sigma_{\hat\bfz}=2)\leq
\bbP^{\omega}_{q,\beta,\Lambda}(\sigma_\bfz=1 ).
\end{equation}
This inequality being preserved by weak limits, the measure $\bbP$
satisfies it as well, hence
\begin{equation}
\bbP(\sigma_{\bfz}=1|\sigma_{\hat\bfz}=2)\leq
\bbP(\sigma_{\bfz}=1)\approx\frac12\left(\bbP^1_{q,\beta}(\sigma_\bfz=1)+\bbP^2_{q,\beta}(\sigma_\bfz=1)\right)
\end{equation}
which converges to 1/2 as $\beta\to\infty$, providing a contradiction with \eqref{loc-potts}.
\end{proof}

Despite this result, it is important to mention that there is enough structure in the Potts model to possibly
allow mixtures of localized states. The measure $\bbP=\frac12(\bbP^{12}+\bbP^{21})$ might still be reachable by a
sequence of finite-volume measures with well-chosen boundary
conditions, for example having a non-trivial structure in 2
directions, allowing the intersection of different possible Steiner trees, and being translation invariant in the 3rd direction, in
order to localize them. This is a work in progress.

\noindent \paragraph{Acknowledgements.} 
I am grateful to A.~van Enter for
encouraging me to work on this question, and for valuable comments and
suggestions. I also thank Y.~Velenik for his advice, a stimulating discussion and a few references, A.~Bovier for
mentioning reference \cite{AlbZeg1992} to me, and V.~Beffara
for the simulations of Figures \ref{step} and \ref{simu1234}. 
I am indebted to Y.~Higuchi for pointing out reference \cite{Miy2004}.
This research was supported by the German Research Foundation (DFG) and the
Hausdorff Center for Mathematics (HCM).

\bibliographystyle{siam}
\small
\bibliography{bibliography}

\normalsize
\vfill

{\sc L.\ Coquille,
Institut f\"ur Angewandte Mathematik, 
Rheinische Friedrich-Wilhelms-Universit\"at,
Endenicher Allee 60, 53115 Bonn, Germany}

{\em E-mail address: \verb|loren.coquille@iam.uni-bonn.de|}

\end{document}